\DeclareMathOperator{\id}{id}
\DeclareMathOperator{\Hom}{Hom}
\DeclareMathOperator{\inl}{inl}
\DeclareMathOperator{\inr}{inr}
\DeclareMathOperator{\unf}{\mathsf{u}}
\DeclareMathOperator{\Prop}{Prop}
\DeclareMathOperator{\Free}{Free}
\newcommand{\C}{\mathcal{C}}
\newcommand{\D}{\mathcal{D}}
\newcommand{\A}{\mathbb{A}}
\newcommand{\B}{\mathbb{B}}
\newcommand{\pow}{\mathscr{P}}
\newcommand{\Poset}{\mathbf{Poset}}
\newcommand{\Sets}{\mathbf{Sets}}
\newcommand{\BA}{\mathbf{BA}}
\newcommand{\DLatt}{\mathbf{DL}}
\newcommand{\SCL}{\mathbf{SCL}}
\newcommand{\uf}{\mathbf{uf}}
\newcommand{\Ord}{\mathrm{Ord}}
\newcommand{\lbk}{\llbracket}
\newcommand{\rbk}{\rrbracket}
\newcommand{\lds}{\llparenthesis}
\newcommand{\rds}{\rrparenthesis}
\newcommand{\smap}{\lbk-\rbk}
\newcommand{\sem}[1]{\lbk#1\rbk}
\newcommand{\op}{\text{op}}
\newcommand{\heart}{\heartsuit}
\newcommand{\Op}{\Xi}
\newcommand{\Opinit}{\mathrm{m}}
\newcommand{\trl}{trl}
\newcommand{\fold}[1]{\ulcorner #1 \urcorner}
\newcommand{\initsem}[1]{\|#1\|}
\newtheorem{assumption}{Assumption}
\title{A Categorical Approach to \\Coalgebraic Fixpoint Logic\thanks{%
This research is partially supported by the Leverhulme Trust Research Project Grant RPG-2020-232 and NWO grant No.~OCENW.M20.053.}}
\author{Ezra Schoen\inst{1} \and Clemens Kupke\inst{1} \and Jurriaan Rot\inst{2} \and Ruben Turkenburg\inst{2}}
\authorrunning{E.~Schoen et al.}
\institute{University of Strathclyde \and Radboud Universiteit Nijmegen}
\begin{document}

\maketitle
\begin{abstract}
    We define a framework for incorporating alternation-free fixpoint logics into the dual-adjunction setup for coalgebraic modal logics. We achieve this by using order-enriched categories. We give a least-solution semantics as well as an initial algebra semantics, and prove they are equivalent. We also show how to place the 
    alternation-free coalgebraic $\mu$-calculus in this framework, as well as PDL and a logic with a probabilistic dynamic modality.
\end{abstract}
\section{Introduction}

   Coalgebra provides a versatile framework for representing different types of state-based dynamic systems
   in a uniform way~\cite{rutten2003univcoalg}. At the heart of the coalgebraic theory lie the semantic notions 
   of behaviour and behavioural equivalence. It is well-known that modal logics provide the appropriate
   syntactic tools to specify and reason about labelled transition systems in a fully abstract way, i.e., in a way that 
   ensures that the language precisely characterises behavioural equivalence~\cite{HMTheorem1980}. 
   This is the basis for research into
   coalgebraic modal logics, i.e., modal logics that are developed parametric in the type of the coalgebra
   that the logic is supposed to be interpreted on. Many different modal logics, such as monotone modal logic, graded 
   modal logic and various probabilistic modal logics, have been
   shown to be instances of coalgebraic modal logics \cite{SchroderP09,CirsteaKPSV08,KupkeP11}. By placing those logics in a common framework, one is able to provide 
   generic proofs of expressivity, soundness and completeness of the logics that can then be instantiated to each of the logics, 
   thus avoiding the need for proving those results for each logic individually.
   Mathematically, the close connection between coalgebras and their corresponding coalgebraic modal logics has been represented in 
   the elegant framework of a dual adjunction that links $B$-coalgebras over a category $\C$ 
   to $L$-algebras over a category $\D$~\cite{KupkeKP04,klin2007beyond,PavlovicMW06}. This category $\D$ should be thought of as the category of algebras for the propositional logic of predicates, 
   whereas $L$ encodes the modal operators of the logic. The ``one-step'' semantics of the modal operators is then provided by a certain type of natural transformation connecting $L$ and $B$ across the adjunction.

   The biggest stumbling block for studying coalgebraic modal logics abstractly is probably the inherent ``one-stepness'' of the theory: 
   the transition structure typically only allows to look one step ahead in the model.
   Consequently, modal operators are usually one-step and axioms of the logics are assumed 
   to be non-iterative~\cite{forsterschroeder2020} i.e., not allowing nesting of modal operators (exceptions such as~\cite{dahlqvistcmcs2016} confirm
   the rule).
   Fixpoint operators pose a problem in this context as they specify properties 
   that can look {\em arbitrarily deep} into the model.
   There are several existing approaches to adding fixpoint operators to coalgebraic modal logics 
   \cite{Venema06,cirstea2011tableaux,schroder2010flat}
   and even a coalgebraic model-checking tool for those logics~\cite{HausmannHPSS24}.
   However, none of the existing approaches to fixpoint logics 
   provide a {\em categorical} treatment of the 
   fixpoints within the above outlined 
   dual adjunction framework. This means that existing coalgebraic fixpoint logics are developed
   on the category of sets and that results such as 
   invariance under behavioural equivalence cannot be proven
   in an abstract, diagrammatic way. 

   The main contribution of this paper is to extend the dual adjunction framework to incorporate alter\-nation-free 
   fixpoint logics.  We will first introduce the key concept of an {\em unfolding system}
   that contains as ingredients a one-step logic, 
   a functor corresponding to the fixpoint operators and a natural transformation
   that we call unfolding and that is used to represent the unfolding of fixpoints. 
   In order to ensure existence of fixpoints we will assume that both the category $\D$
   and the functors on $\D$ corresponding to the logic are $\Poset$-enriched. 
   We will define the semantics of a given unfolding system as the least solution of an 
   unfolding operation.  
   After we present the abstract framework for fixpoint logics in Section~\ref{sec:semantics} 
   we will demonstrate
   how to place several examples in the framework: a positive modal 
   logic with a transitive closure modality, a probabilistic version of a similar
   logic whose fixpoint operator resembles iteration 
   in PPDL~\cite{kozen1985ppdl,gu2020ppdl} and finally the positive fragment of
   PDL~\cite{rot2021steps} without tests. 
   We will then prove key technical results: a diagrammatic
   proof of ``adequacy'' of the given fixpoint logic and, in Section~\ref{sec:initsem}, the fact 
   that the semantics can equivalently be obtained as more standard initial algebra semantics.

   In the final part of the paper, consisting of Section~\ref{sec:afcfl} 
   and Section~\ref{sec:negations}, we show how to place the alternation-free 
   fragment of the coalgebraic $\mu$-calculus (in the sense of~\cite{cirstea2011tableaux}) 
   into our framework. This is
   done in two stages: first, in Section~\ref{sec:afcfl}, we translate
   the positive fragment of the coalgebraic $\mu$-calculus into 
   a syntax that allows us to place the logic into our framework by 
   representing the logic and its semantics as an unfolding system. 
   In Section~\ref{sec:negations} we then provide a general recipe 
   for adding negations to a fixpoint logic represented in our framework.
   The latter will in particular show that the full alternation-free
   fragment of the coalgebraic $\mu$-calculus can be represented 
   using a suitable functor $L$ and the associated 
   initial $L$-algebra semantics. 
   Finally, in Section~\ref{sec:conclusion}, 
   we conclude with some ideas for ongoing and future work.

\section{Preliminaries}

We will assume familiarity with basic category theory, coalgebra and 
modal logic. 

\subsubsection{Notation}
    Throughout the paper we will use $\pow$ to denote the covariant powerset functor. 
    The functor $P$ denotes the left adjoint of the dual adjunction we will be working with. 
    In many concrete
    instances of this adjunction, $P$ will thus denote the contravariant power set functor.

\subsubsection{Fixpoints}\label{sec:fix}

We will heavily rely on Kleene's fixpoint theorem and its generalisation by Cousot \& Cousot~\cite{cousot1979fixpoint}  
that states that the least fixpoint of a monotone function $f: L \to L$ on a complete lattice $L$ exists and can be obtained as the limit of the sequence
$f^0 = \perp$, $f^{i+1} = f(f^i)$ for an arbitrary ordinal $i$ and
$f^j = \bigvee_{i<j} f^i$ for limit ordinals $j$. We will not make explicit use of the dual statement concerning greatest fixpoints.

\subsection{One-step Logics}\label{sec:onestep}

We model coalgebraic modal logics via the dual adjunction approach, cf.~e.g.~\cite{klin2007beyond}.
Throughout the paper we will assume to work in a setting where we
are given:
\begin{itemize}
    \item A category $\C$ of \emph{spaces}, which are the carriers for coalgebras;
    \item A category $\D$ of \emph{algebras} for some underlying `propositional' logic; 
    \item A dual adjunction $P:\C\to \D^\op$ and $Q:\D^\op\to \C$ 
    with $P\dashv Q$, i.e., 
    for all $X \in \C$, $A \in \D$ we have 
    $
    \Hom_\C(X, QA)\cong \Hom_\D(A, PX)
    $
    and this isomorphism is natural in both $X$ and $A$;
    \item and an endofunctor $B:\C\to \C$ specifying
    the coalgebra type.
\end{itemize}

The adjunctions $P\dashv Q$ we consider are \emph{logical connections}, as in \cite{kurz2013connect}. That is to say, $P$ and $Q$ are both of the form $\Hom(-, \Omega)$, where $\Omega$ is a so-called `ambimorphic' object, living in both $\C$ and $\D$. 

\begin{definition}\label{def:onesteplogic}
    Given a functor $B:\C\to \C$, a pair $(L_0,\delta)$ consisting of
    a functor $L_0:\D\to\D$ and a natural transformation
    $\delta:L_0P\to PB$
    is called a {\em one-step logic for $B$}.  
\end{definition}
The natural transformation captures the ``one-step semantics'' of the logic. 
Crucially, while we will be able to place fixpoint logics into the dual adjunction
framework, we will see that those logics cannot be described as one-step logics. This reflects
the fact that fixpoint logics are inherently multi-step, as fixpoint operators can be 
used to inspect the model arbitrarily deep.

\subsubsection{Examples of One-Step Logics}

\begin{example}\label{ex:positivemodal}
We take $\C$ to be the category $\Sets$, and $\D$ to be the category $\DLatt$ of 
distributive lattices. 
One half of the adjunction is formed by the powerset functor $P:\Sets\to\DLatt^\op$,
which can be seen as exponentiation $2^-$. 
Similarly, for $Q: \DLatt^\op \to \Sets$ we use $\Hom_\DLatt(-,2)$
. It is well known that $P\dashv Q$, forming the logical connection using $2$ as the ambimorphic object. 

For our behavior functor, we fix a set $\Prop$ of propositional letters, and define $B:\Sets\to\Sets$ as
$
BX = \pow\Prop \times \pow X
$.
For our one-step logic, we let $L_0:\DLatt\to\DLatt$ be given by
$
L_0\A = \Free\left(\{\Diamond a\mid a\in A\}\cup \{p\mid p\in \Prop\}\right)/{\approx}
$
where $\Free$ generates the free distributive lattice on a set of 
generators,
and $\approx$ is the least congruence satisfying
$\Diamond a \approx (\Diamond a \wedge\Diamond b)$ whenever $a \leq b$.

We note for future reference that quotienting out $\approx$ exactly ensures that $\Diamond a \leq \Diamond b$ whenever $a \leq b$. We obtain a one-step semantics $\delta:L_0P\to PB$ via
\[
\delta:\begin{cases}p&\mapsto \{\langle m, U\rangle \mid p\in m\}\\\Diamond v&\mapsto \{\langle m, U\rangle \mid v\cap u\neq\varnothing\}\end{cases}
\]
and extending freely. This yields 
the expected semantics of (positive) modal logic as follows: consider a 
$B$-coalgebra $(X,\gamma)$, let $(\Psi,\alpha)$ be the initial $L_0$-algebra
 (to be thought of as algebra of formulas) and consider the $L_0$-algebra 
 $P\gamma \circ \delta: L_0  PX \to PX$. The initial algebra map $\smap$ will satisfy the following
 \begin{eqnarray*}
    \sem{\alpha ( \Diamond a)} & = & P\gamma ( \delta_X (L \smap(\Diamond a))) 
     =  \{ x \in X \mid \gamma(x) \in \delta_X(\Diamond \sem{a}) \} \\
     & = & \{ x \in X \mid \gamma(x) \in \{\langle m,U
     \rangle \in BX \mid U \cap \sem{a} \not= \emptyset \} \}
 \end{eqnarray*}
 which expresses precisely that $\Diamond a$ will be true at those states that have at least
 one successor ``satisfying'' $a$. 
\end{example}

\begin{example}\label{ex:quantmodal}
For $\C$ we again take the category of sets. For $\D$, we take distributive lattices that come equipped with subconvex combinations. By this, we mean that if $\lambda_1, \dots,\lambda_n\in [0,1]$ with $\sum_{i}\lambda_i < 1$, then for each $a_1,\dots, a_n\in \A$, we obtain an element
\[
\sum_i\lambda_ia_i = a\in \A
\]
Moreover, we demand that
\[
1\cdot a = a,\quad \sum_i\lambda_i(\sum_{j}\mu_{i,j}a_{i,j}) = \sum_{i,j}(\lambda_i\cdot \mu_{i,j})a_{i,j}
\]
and
\[
\lambda a + \mu (b\vee c) = (\lambda a + \mu b)\vee(\lambda a + \mu b)
\]
and similar for $\wedge$. Let $\SCL$ be the category with objects subconvex lattices, and morphisms the maps preserving both subconvex and lattice structure.

A key example of a subconvex lattice is given by $[0,1]$, with $\max$ and $\min$ as lattice operations, and subconvex structure given in the obvious way. Note also that if $\A$ is a subconvex lattice, then so is $\A^X$ with pointwise structure; hence, we obtain our main examples as $[0,1]^X$, with $X$ any set.

Note also that if $f:X\to Y$ is any function, then $f^*:[0,1]^Y\to [0,1]^X$ is a subconvex lattice homomorphism. Hence, we obtain a functor $P:\Sets\to \SCL^\op$ given by $X\mapsto [0,1]^X$. Vice versa, we clearly have a morphism $Q:\SCL^\op\to\Sets$ given by $\A\mapsto \Hom_{\SCL}(\A,[0,1])$. It is easy to see that there is an adjunction $P\dashv Q$, since they are both of the form $\Hom(-, [0,1])$. 
Finally, for our fixpoint extensions later on, 
note that $\SCL$ is enriched in posets in the obvious way, 
and that $PX$ is a complete lattice for all $X$. 

For a set $X$, let $\Delta(X)$ be the set of finitely supported subdistributions on $X$:
\[\Delta(X) \mathrel{:=} \{\mu:X\to [0,1]\mid \mu(x) = 0\text{ all but finitely many }x, \sum_{x\in X}\mu(x) \leq 1\}.\]
$\Delta$ is an endofunctor 
on $\Sets$, where $\Delta f$ maps a distribution $\mu$ on $X$ to $\mu_f$ with 
$
\mu_f(y) = \sum_{f(x) = y}\mu(x).
$
Now fix a set $A$ of labels, and let $B$ be the functor $BX = [0,1]^A\times \Delta(X)$. We can think of a $B$-coalgebra as a probabilistic 1-player game, where in a given state, the player may select a label $A$ to obtain a `payout', or probabilistically transition to a next state; but if the player takes the probabilistic transition, there is a possibility of failure, since the probabilities need not add up to 1. 

In this context, we may posit the following one-step logic $L_0:\SCL\to \SCL$: for a given subconvex lattice $\A$, we let $L_0\A$ be the free subconvex lattice generated by $\{\Diamond a\mid a\in \A\}\cup \{p\mid p\in A\}$, quotiented by the equations
\begin{align*}
\sum_i \lambda_i \Diamond a_i &\approx \Diamond(\sum_i\lambda_i a_i)\\
\Diamond a \wedge \Diamond b &\approx \Diamond a&\text{ whenever }a\wedge b = a
\end{align*}

Intuitively, $\Diamond a$ should be read as `the expected value of $a$'; this is why we demand that $\Diamond$ acts linearly and preserves the order, but does \emph{not} necessarily preserve lattice structure (as $\mathbb{E}[\max(X,Y)]$ is usually strictly greater than both $\mathbb{E}X$ and $\mathbb{E}Y$). 
Using the intuition of `expectation', we obtain a one-step semantics $\delta:L_0P\to PB$ by
\begin{align*}
\delta(\Diamond u)&:\langle\pi,\mu\rangle \mapsto \sum_{x\in X}\mu(x)u(x) =: \mathbb{E}_\mu(u)\\
\delta(p)&:\langle\pi,\mu\rangle\mapsto \pi(p)
\end{align*}
and extending freely; it is easy to verify that $\delta$ respects the equations for $L_0$, so this indeed is a well-defined subconvex lattice morphism $L_0P\to PB$. Naturality is also easy to verify. 
\end{example}

\subsection{Enriched Categories}

We review the concepts from enriched categories that we use in this paper. Since we only consider categories enriched in $\Poset$, some things simplify. In particular, for our purposes we do not need the general case of weighted (co)limits, which allows us to stick close to unenriched category theory. For a more in-depth 
treatment of enriched categories see~\cite{kelly2005enriched}.

\begin{definition}
A \emph{$\Poset$-enriched category} $\C$ is a category $\C$, together with a partial order ${\leq} = {\leq_{A,B}}$ on each homset $\Hom_\C(A,B)$, such that
\[
-\circ-:\Hom_\C(B,C)\times\Hom_\C(A,B)\to \Hom_\C(A,C)
\]
is an order-preserving map for all $A,B,C$. 
\end{definition}
A key example of a $\Poset$-enriched category is $\Poset$ itself: one can order morphisms `pointwise', that is,
$
f\leq_{A,B} g \text{ iff }f(x)\leq_B g(x)\text{ for all }x\in A
$.
In fact, all examples of $\Poset$-enriched categories in this paper are ordered pointwise in a similar way. That is, we consider categories $\D$ where objects are ordered structures, and the morphisms respect the orders; we then enrich $\D$ in $\Poset$ by ordering the morphisms pointwise. 
There is also a notion of enriched functor.
\begin{definition}
Let $\C,\D$ be $\Poset$-enriched categories. A \emph{$\Poset$-enriched functor} is a functor $F:\C\to\D$ such that for all parallel arrows $f,g:A\to B$ in $\C$, we have
$
f\leq g \implies Ff\leq Fg
$.
\end{definition}
Finally, we have a notion of enriched coproduct:
\begin{definition}
Let $\C$ be a $\Poset$-enriched category, and let $A,B$ be objects in $\C$. We get a functor $F:\C\to\Poset$ given by
$
FX = \Hom_\C(A,X)\times\Hom_\C(B,X)
$.
We call an object $C$ of $\C$ an \emph{enriched coproduct of $A$ and $B$} if there is a natural isomorphism of functors
$
F\cong\Hom_\C(C,-)
$.
\end{definition}
That is to say, an enriched coproduct of $A$ and $B$ is a coproduct $A+B$ such that the (natural) isomorphism
$
\Hom_\C(A,X)\times\Hom_\C(B,X) \cong \Hom_\C(A+B,X)
$
is an isomorphism of \emph{posets}, not merely sets. 

\section{Semantics of Fixpoint Logics}~\label{sec:semantics}

\subsection{Unfolding Systems}

To define a fixpoint logic, we take a `two-tiered' 
approach: one starts with a `base' modal (one-step) logic, 
to which one adds fixpoint modalities. Hence 
we will work in a setting where we are given  
a one-step logic as in Definition~\ref{def:onesteplogic},
together with an enrichment that supports fixpoint operators.

\subsubsection{Enriching the Dual Adjunction}

To define the semantics of a fixpoint formula as a `least solution', 
it is necessary to be able to compare predicates (i.e., elements of objects in $\D$). 
Since we intend a fully diagrammatic approach, we 
generalize comparing elements to comparing \emph{maps} $X\to Y$ in $\D$. 
This is a generalization indeed: if $\D$ is a concrete 
category where each object is an ordered set, then maps can be ordered pointwise. 
To be able to compare maps, we fix an enrichment of $\D$ in $\Poset$. We need some additional 
assumptions:
\begin{assumption}\label{ass:P}
\begin{enumerate}[label = (\roman*)]
\item \label{ass:P1} We assume that $\Hom_\D(X, PY)$ is a complete lattice for each $Y\in \C$, 
and that for each morphism $f:Y'\to Y$, the map $Pf\circ -:\Hom_\D(X,PY)\to \Hom_\D(X, PY')$ 
preserves the lattice structure and arbitrary directed joins. 

\item \label{ass:P2} Since $P$ is a left adjoint, it preserves colimits; however, we require the stronger condition that $P$ maps coproducts to \emph{enriched} coproducts.
\end{enumerate}
\end{assumption}
\begin{remark}
A natural candidate for $\D$ is the category $\BA$ of Boolean algebras, as we have the powerset-ultrafilter adjunction $P\dashv \uf$ between $\Sets$ and $\BA^\op$. Note, however, that if we enrich $\BA$ by pointwise order on the morphisms, the morphisms end up being ordered discretely. For, if $f\leq g$ in $\Hom_\BA(\A,\B)$, then for each $x\in \A$ we have $f(\neg x)\leq g(\neg x)$. Hence,
$
g(x) = \neg g(\neg x) \leq \neg f(\neg x) = f(x)
$, showing $g\leq f$ as well. 

It may be possible to enrich $\BA$ via a different, non-pointwise strategy; however, it can be shown that there is no enrichment of $\BA$ that makes $\Hom(X,PY)$ a complete lattice for all $X, Y$; hence, we will work with the category $\DLatt$ of distributive lattices, rather than $\BA$. 
This could be seen as an analogue of the constraint on 
fixpoint equations, that fixpoint variables only occur \emph{positively}. 
In section~\ref{sec:negations}, we outline how negations can be added 
into the picture outside of the fixpoint equations. 
\end{remark}

\subsubsection{Logical Functors}
In order to be able to enrich a given logic with fixpoints we will
assume to be given a one-step logic $(L_0, \delta)$ for $B$ 
 and posit a functor 
$L:\D\to \D$ representing the fixpoint modalities. 
We assume moreover that both $L_0$ and $L$ are $\Poset$-enriched functors, 
and that $L$ has an initial algebra.
Throughout the paper we will write $\Phi$ for the initial $L$-algebra, 
with structure map $\alpha:L\Phi\to \Phi$ (note that $\alpha$ is an isomorphism); 
it should be thought of as the `algebra of formulas'. 

\begin{remark}
    Intuitively the assumption that the functor $L_0$ is enriched means that all modal 
    operators in the base logic are monotone.   
\end{remark}

We will now define a (alternation-free) fixpoint logic 
as an extension of a basic one-step logic. 
A key element is the so-called unfolding operation for the fixpoints. 

\begin{definition}
    Let $\D$ be a $\Poset$-enriched category. 
    Let $B:\C \to \C$ be a functor, and let $L_0:\D \to \D$ be a $\Poset$-enriched functor. 
    An {\em unfolding system} for $L_0$ is a pair 
    $(\unf:L\to \id + L_0L,\delta:L_0 P \to P B)$ where 
    $L:\D \to \D$ is a $\Poset$-enriched functor that has an initial algebra $\Phi$,  
    $\unf$ is a natural transformation and $(L_0, \delta)$ is a one-step logic for $B$ in the sense of Def.~\ref{def:onesteplogic}.
    %
\end{definition}

The natural transformation $\unf:L\to \id + L_0L$ provides the key to define the semantics of fixpoint modalities. 

Intuitively, if we think of an element $\ell$ of $L\A$ as being a loop of some type, we see that there are two `branches' to 
unfolding $\ell$; we may exit the loop, 
which yields an element of $\A$, or we may take a step in the model (represented by a modality taken from $L_0$), after which we continue with a loop. 

In more detail, a generic element of $L\A$ may look like 
$\ell(x_1,\dots, x_n)$, with $\ell$ 
some fixpoint modality, and $\bar x = (x_1,\dots, x_n)\in \A^n$. 
We want $\ell$ to satisfy a fixpoint equation
\[
\ell(\bar x) = \sigma(\bar x, g_1\cdot \ell_1(\bar x),\dots, g_k\cdot \ell_k(\bar x))
\]
Here $\sigma$ is some `propositional' 
expression, and each $\ell_j$ occurs \emph{guarded} 
by a one-step modality $g_i$. The transformation $\unf$ 
replaces the LHS of each such expression with the RHS; 
and the RHS lives in $\A + L_0L\A$. 

\subsection{Definition of the Semantics}

We are now ready to define the semantic map $\smap$. 
\begin{definition}\label{def:smap}
Let $(\unf,\delta)$ be an unfolding system for $B$ and let $(X,\gamma)$ be a 
$B$-coalgebra. The semantic map $\smap_\gamma:\Phi \to PX$ on $(X,\gamma)$
 is defined as the least map $t:\Phi \to PX$ making the following diagram commute:
\[
\begin{tikzcd}
\Phi\arrow[d, "\alpha^{-1}"]\arrow[r, "t"] & PX\\
L\Phi\arrow[d, "\unf"] & PX + PBX\arrow[u, "{[PX, P\gamma]}"]\\
\Phi +  L_0L\Phi\arrow[d, "{\id +L_0\alpha}"] \\
\Phi + L_0\Phi \arrow[r, "t + L_0t"] & PX + L_0PX\arrow[uu, "\delta"]
\end{tikzcd}
\]
When the $X$ is clear from the context we often will drop the subscript and simply denote the
semantic map by $\smap$.
\end{definition}
Intuitively, the left-hand side of the above diagram takes a formula, and unfolds the top-level fixpoint modalities, to obtain a one-step 
$L_0$-formula over $\Phi$. Using the semantic map, this can be interpreted 
as a one-step $L_0$-formula over $PX$, which we can reduce to a predicate in $PX$ 
using the one-step semantics $\delta$, and the coalgebra structure $\gamma$. 

In order to see that $\smap$ is well-defined, we note that the above formulation
is equivalent to saying that $\smap$ is the least fixpoint of the operator
\begin{equation}\label{equ:Op}
\begin{array}{rcl}
   \Op_\gamma: \Hom_\D(X,PX) & \to & \Hom_\D(X,PX) \\ 
   \\ 
        t & \mapsto & [PX, P\gamma] \circ \delta \circ (t + L_0t) \circ (\id + L_0\alpha)
        \circ \unf \circ \alpha^{-1}
\end{array}
\end{equation}
That $\Op_\gamma$ is a monotone operator can be seen as follows:
\begin{enumerate}
	\item as $\D$ is order-enriched, both pre- and post-composition of morphisms are monotone, and 
	\item the operation that maps $t$ to $t + L_0 t$ is monotone.  
\end{enumerate}
The first fact is immediate from the definition of enriched categories. The second fact follows from the operation $t \mapsto L_0 t$ being monotone ($L_0$ is an enriched functor) and because the cotupling 
operation is monotone as we have
\[ \Hom(X + Y,Z) \cong \Hom(X,Z) \times \Hom(Y,Z) \]
by definition of the coproduct and because the above isomorphism is an isomorphism of posets.  
As $\Op$ is monotone and as $\Hom_\D(X,PX)$ is a complete lattice by assumption, the
least fixpoint $\smap$ of $\Op$ exists.

\begin{remark}
While the definition of $\lbk-\rbk$ is in terms of a least fixpoint, the same setup can be used to define greatest fixpoints as well, by inverting the order on the homsets. However, we are as yet restricted to only a single \emph{type} of fixpoint. 
\end{remark}

\subsection{Examples} 

\begin{example}\label{ex:diamondstar}
As a first example, we give a simple logic of transitive closure in Kripke frames. 

We start with the one-step logic as given in example \ref{ex:positivemodal}. To obtain an unfolding system, we take $L:\DLatt\to\DLatt$ similarly to $L_0$, but using $\Diamond^*$ rather than $\Diamond$:
\[
L\A = \Free\left(\{\Diamond^* a\mid a\in A\}\cup \{p\mid p\in \Prop\}\right)/{\approx}
\]
where $\approx$ is as in example~\ref{ex:positivemodal}. We get an unfolding transformation $\unf:L\to \id + L_0L$ by
\[
\unf:\begin{cases}p&\mapsto \inr(p)\\\Diamond^*a&\mapsto \inl(a)\vee \inr(\Diamond\Diamond^*a)\end{cases}
\]
To spell out the concrete description of the above logic, we have formulas defined via
\[
\phi,\psi ::= p\in \Prop\mid \top \mid \bot \mid \phi\wedge\psi \mid \phi\vee\psi \mid \Diamond^*\phi
\]
For a given coalgebra seen as a Kripke model $\mathfrak{M} = (W, R, V)$, consider the usual definition of satisfaction of formulas, given by
\begin{align*}
x\Vdash p &\text{ iff } x \in V(p)\\
x\Vdash \Diamond^*\phi &\text{ iff }\exists v_1,\dots, v_n\text{ such that }xRv_1R\dots Rv_n\text{ and } v_n\Vdash \phi
\end{align*}
We claim that our approach yields the same semantics, in the sense that
\[
\lbk\phi\rbk = \{x\in W \mid x \Vdash \phi\}
\]
To see this, first note that the map $t:\Phi\to PW$ given by $\phi\mapsto \{x\in W\mid x\Vdash \phi\}$ is a solution to the diagram in definition~\ref{def:smap}; hence, $\lbk \phi\rbk \subseteq t(\phi)$, as $\lbk-\rbk$ is the least solution. For the other direction, let $t^{(k)}$ be the map that interprets $\Diamond^*$ as `reachability \emph{in at most $k$ steps}'; by induction on $k$, one can show that each $t^{(k)}$ is below all solutions to the diagram, and hence each $t^{(k)} \leq \smap$. Since $t$ is clearly the supremum of the $t^{(k)}$, we also have $t(\phi)\subseteq \lbk \phi\rbk$ for all formulas $\phi$. 
\end{example}

\begin{example}\label{ex:pdl}The above example may be extended to cover Propositional Dynamic Logic (PDL). We take the same base adjunction, but adjust $B$ and $L_0$ to include a set of modalities. That is, for a set $\Pi$ of `atomic programs', we let $BX = \pow\Prop \times (\pow X)^\Pi$, and take $L_0:\DLatt\to\DLatt$ to be
\[
L_0\A = \Free(\Prop \cup \{\langle \pi \rangle a\mid \pi \in \Pi, a\in \A\})/{\approx}
\]
with obvious action on morphisms. Then just as before, we obtain a one-step semantics as
\[
\delta:\begin{cases}p&\mapsto \{\langle m, u\rangle \mid p\in m\}\\\langle \pi\rangle v&\mapsto \{\langle m, u\rangle \mid v\cap u(\pi)\neq\varnothing\}\end{cases}
\]
For PDL, we define (composite) programs via the following grammar:
\[
\alpha ::= \pi\in \Pi \mid \epsilon \mid \alpha \cup \alpha \mid \alpha ; \alpha \mid \alpha^*
\]
We write $\bar \Pi$ for the set of programs. There is a function $g:\bar\Pi \to \bar \Pi$ such that $g(\alpha)$ is equivalent to $\alpha$, and $g(\alpha)$ 
is of the form $\sum_i \pi_i;\alpha_i$ or $(\sum_i\pi_i;\alpha_i) \cup \epsilon$. Existence of $g$
can be proven as the normal form in~\cite[Thm.~4.4]{brzozowski} using the
Brzozowski derivative.

We let $L:\DLatt\to\DLatt$ be given by 
\[
L\A := \Free(\Prop \cup \{\langle \alpha \rangle a\mid \alpha \in \bar\Pi, a\in \A\})/{\approx}
\]
with obvious action on morphisms. We define $\unf:L\to \id +L_0L$ as
\begin{align*}
\unf(p) &:= \inr(p)\\
\unf(\langle \alpha\rangle a) &:= \begin{cases} \inr\left(\bigvee_i \langle \pi_i\rangle \langle \alpha_i\rangle a\right) & g(\alpha) = \sum_i \pi_i; \alpha_i\\\inr\left(\bigvee_i \langle \pi_i\rangle \langle \alpha_i\rangle a\right)\vee \inl(a) & g(\alpha) = (\sum_i \pi_i;\alpha_i) \cup \epsilon\end{cases}
\end{align*}
Concretely, we see that $g(\pi^*) = \pi;\pi^* \cup \epsilon$, and hence (omitting coproduct inclusions) we have
\[
\unf(\langle \pi^*\rangle a) = \langle \pi \rangle \langle \pi^*\rangle a \vee a
\]
Just like $\Diamond^*$ above, it can be shown that $\lbk \langle\alpha\rangle\phi\rbk$ has the usual denotation.
\end{example}

\begin{example}\label{ex:quantex} As a third example, we give a quantitative logic for one-player games. 

In example~\ref{ex:quantmodal}, we gave a simple quantitative modal logic. We may define a functor $L$ by setting
\[
L\A := \Free(\{\sigma_qa, \Diamond^*a\mid a\in \A, q\in [0,1]\})
\]
(where $\Free$ denotes the `free subconvex lattice'-functor) and unfolding the generators as
\begin{align*}
\unf(\sigma_qa) &= qa + (1-q)\Diamond \sigma_qa\\
\unf(\Diamond^*a) &= a \vee \Diamond\Diamond^*a
\end{align*}
Under this unfolding, $\lbk\sigma_q\phi\rbk$ will assign to a state $x$ the expected outcome of a $q$-probabilistic strategy, where with probability $q$, the player chooses the payout corresponding to $\phi$, and with probability $(1-q)$ she decides to play on. 

On the other hand $\lbk \Diamond^*\phi\rbk$ assigns to a state $x$ the expected outcome of the \emph{optimal} strategy, where the player chooses to play on only if the outcome she can expect from playing on is greater than the payout she can get now.
\end{example}

\subsection{Invariance Under Behavioural Equivalence}\label{sec:invariance}

A fundamental property of coalgebraic modal logic is that the semantics of the logic
is invariant under behavioural equivalence. This follows from invariance under coalgebra morphisms
which can be concisely expressed as in~\eqref{eq:invariance} below. The proposition
shows that invariance under behavioural equivalence also holds in the presence of fixpoint
operators.
\begin{proposition}
    For all coalgebra morphisms $f:(X_1,\gamma_1) \to (X_2, \gamma_2)$
    we have
    \begin{equation}\label{eq:invariance}
        Pf \circ \smap_{\gamma_2} = \smap_{\gamma_1}
    \end{equation}
\end{proposition}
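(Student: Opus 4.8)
The plan is to lean on the construction of $\smap_\gamma$ as a least fixpoint and to prove the statement for each Kleene approximant. Recall from~\eqref{equ:Op} that $\smap_{\gamma_i}$ is the least fixpoint of the monotone operator $\Op_{\gamma_i}$ on $\Hom_\D(\Phi,PX_i)$, and that, by Kleene's theorem in the transfinite form of Cousot \& Cousot~\cite{cousot1979fixpoint}, it is reached as the (eventually stationary) value of the ascending chain $\Op_{\gamma_i}^{(0)} = \bot$, $\Op_{\gamma_i}^{(j+1)} = \Op_{\gamma_i}(\Op_{\gamma_i}^{(j)})$, and $\Op_{\gamma_i}^{(j)} = \bigvee_{k<j}\Op_{\gamma_i}^{(k)}$ at limit ordinals $j$ (the chain is ascending because $\Op_{\gamma_i}$ is monotone, as already noted). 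I would prove, by transfinite induction on $j$, that $Pf \circ \Op_{\gamma_2}^{(j)} = \Op_{\gamma_1}^{(j)}$ for all $j$, and then take $j$ large enough for both chains to have stabilised, obtaining $Pf \circ \smap_{\gamma_2} = \smap_{\gamma_1}$.

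The induction needs two inputs. The first is a \emph{fusion law}: for every $s:\Phi\to PX_2$,
\[
\Op_{\gamma_1}(Pf \circ s) \;=\; Pf \circ \Op_{\gamma_2}(s).
\]
I expect this to be a short diagram chase through the definition~\eqref{equ:Op}. Functoriality of $L_0$ and of the coproduct gives $(Pf\circ s) + L_0(Pf\circ s) = (Pf + L_0 Pf)\circ(s + L_0 s)$; naturality of $\delta$ gives $(\id + \delta_{X_1})\circ(Pf + L_0 Pf) = (Pf + PBf)\circ(\id + \delta_{X_2})$; and, since $f$ is a coalgebra morphism, applying $P$ to $Bf\circ\gamma_1 = \gamma_2\circ f$ yields $P\gamma_1\circ PBf = Pf\circ P\gamma_2$, from which $[\id,P\gamma_1]\circ(Pf + PBf) = Pf\circ[\id,P\gamma_2]$. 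Crucially, the ``fixpoint part'' $\unf\circ\alpha^{-1}$ of $\Op$ is left untouched, since it does not mention $X$. The second input is that, by Assumption~\ref{ass:P}\ref{ass:P1}, the map $Pf\circ{-}:\Hom_\D(\Phi,PX_2)\to\Hom_\D(\Phi,PX_1)$ preserves the lattice structure and arbitrary directed joins; in particular it sends $\bot$ to $\bot$ and commutes with joins of ascending chains.

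With these in place the transfinite induction should be routine: the base case is strictness of $Pf\circ{-}$; the successor case is the fusion law applied to $s = \Op_{\gamma_2}^{(j)}$ together with the induction hypothesis; and the limit case uses that $Pf\circ{-}$ commutes with the join of the ascending (hence directed) chain $(\Op_{\gamma_2}^{(k)})_{k<j}$. I do not anticipate a real obstacle; the two things to be careful about are that one genuinely needs the approximant argument — merely observing that $Pf\circ\smap_{\gamma_2}$ is a fixpoint of $\Op_{\gamma_1}$ only yields $\smap_{\gamma_1} \le Pf\circ\smap_{\gamma_2}$, not the reverse inequality — and that the limit step is precisely what the directed-join clause of Assumption~\ref{ass:P}\ref{ass:P1} was included to support.
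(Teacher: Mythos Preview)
Your proposal is correct and uses the same ingredients as the paper's proof: the fusion law $\Op_{\gamma_1}(Pf\circ s)=Pf\circ\Op_{\gamma_2}(s)$, transfinite induction on the approximant chain, and Assumption~\ref{ass:P}\ref{ass:P1} for the limit step. The only organisational difference is that the paper obtains the inequality $\smap_{\gamma_1}\le Pf\circ\smap_{\gamma_2}$ directly from the fixpoint property (exactly the observation you flag as insufficient on its own) and then runs the approximant induction only for the reverse inequality $\Op_{\gamma_1}^i\ge Pf\circ\Op_{\gamma_2}^i$, whereas you prove equality $\Op_{\gamma_1}^{(j)}=Pf\circ\Op_{\gamma_2}^{(j)}$ at every stage in a single pass; your version is slightly more streamlined but relies on the strictness $Pf\circ\bot=\bot$ in the base case, which is indeed covered by the lattice-preservation clause of Assumption~\ref{ass:P}\ref{ass:P1}.
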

\begin{proof}
    Recall the defintion of $\Op$ from \eqref{equ:Op} on page~\pageref{equ:Op}.
    Using naturality of $\delta$ it is a matter of routine checking that
    $$\Op_{\gamma_1} = \Hom(-,Pf) \circ \Op_{\gamma_2}.$$
    As a consequence we have
    $Pf \circ \smap_{\gamma_2} \geq \smap_{\gamma_1}$
    because
    $$\Op_{\gamma_1} (Pf \circ \smap_{\gamma_2} ) = Pf \circ \Op_{\gamma_2} (\smap_{\gamma_2} ) =
    Pf \circ \smap_{\gamma_2},$$
    i.e., $Pf \circ \smap_{\gamma_2}$ is a fixpoint of $\Op_{\gamma_1}$ and $\smap_{\gamma_1}$ is the
    least such fixpoint.

    We will now show by ordinal induction that for all $i \in \mathrm{ORD}$ we have
    \[ \Op_{\gamma_1}^i \geq \Hom(-,Pf) \circ \Op_{\gamma_2}^i \]
    where $\Op_{\gamma_1}^i$ and $\Op_{\gamma_2}^i$ are the approximants defined in Section~\ref{sec:fix}.
    \begin{description}
        \item[Case] $i = j + 1$. Then
        \begin{eqnarray*}
            \Op_{\gamma_1}^i & = & \Op_{\gamma_1}(\Op_{\gamma_1}^j) 
                        \geq \Op_{\gamma_1}(\Hom(-,Pf)\circ \Op_{\gamma_2}^j) \\
                        & = & \Hom(-,Pf) \circ \Op_{\gamma_2}(\Op_{\gamma_2}^j)
                         =  \Hom(-,Pf) \circ \Op_{\gamma_2}^i
        \end{eqnarray*}
        \item[Case] $i$ is a limit ordinal. then
        \begin{eqnarray*}
            \Op_{\gamma_1}^i & = & \bigvee_{j < i} \Op_{\gamma_1}^j 
            \geq  \bigvee_{j < i} \Hom(-,Pf)\circ \Op_{\gamma_2}^j \\
            & \stackrel{\mbox{\tiny Ass.~\ref{ass:P}}}{=}&  \Hom(-,Pf)\circ \bigvee_{j < i} \Op_{\gamma_2}^j
             =   \Hom(-,Pf) \circ \Op_{\gamma_2}^i
        \end{eqnarray*}
    \end{description}
    The claim follows now as $\D$ is locally small which implies the hom sets
    $\Hom(\Phi, P X_i)$ for $i=1,2$ are sets, and thus $\smap_{\gamma_i}=\Op_{\gamma_i}^i$ for a
    suitably large ordinal $i$.
\end{proof}

\subsection{Guarded Equations of Arbitrary Depth}

The framework presented so far seems to place some restrictions on the shape of the fixpoint equations; consider the following toy logic:
\[
\phi ::= \top \mid \bot \mid \phi \vee \phi \mid \phi\wedge\phi \mid \Diamond^{2*}\phi
\]
with the intended interpretation of $\Diamond^{2*}$ in a transition system $(X,\to)$ being the least solution to
$
\lbk \Diamond^{2*}\phi \rbk = \lbk \phi\rbk \vee \{x\in X\mid \exists y,z:x \to y \to z\text{ and }z\in \lbk \Diamond^{2*}\phi\rbk\}
$
The natural guarding 1-step logic is positive modal logic; but this would yield an unfolding transformation
$
\unf:\Diamond^{2*}a \mapsto a\vee \Diamond\Diamond \Diamond^{2*}a
$
and the RHS is not of the form $\A + L_0L\A$. A possible remedy would be to add an extra operator $\Diamond^{2*+1}$ to the logic, which would allow the depth-one transformation
\begin{align*}
\Diamond^{2*}a\mapsto a \vee \Diamond \Diamond^{2*+1}a & \qquad &
\Diamond^{2*+1}a\mapsto \Diamond\Diamond^{2*}a
\end{align*}
but this requires an extension of the logic, and is somewhat ad-hoc.

A second issue is that in the current presentation, \emph{all} modalities are treated as fixpoint modalities. But most fixpoint logics contain a mixture of fixpoint and one-step modalities.
In fact both fixpoint modalities of higher depth and one-step modalities (which may be seen as `depth 0 fixpoint modalities') can be accomodated by our setup, in a uniform way.

\begin{proposition}\label{prop:depth}
Assume that $L+L_0$ generates an algebraically free monad $T$. Let $\unf :L \to \id + L_0T$ be a natural transformation. Then we obtain a natural transformation $\unf^*:T\to \id + L_0T$ such that $\unf$ factors through $\unf^*$.
\end{proposition}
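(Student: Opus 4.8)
The plan is to build $\unf^*$ by structural recursion on $T$-terms, phrased categorically as a fold into a product (a paramorphism), exploiting that algebraic freeness makes each $TX$ a parametrised initial algebra. Recall what algebraic freeness gives: for every object $X$ the object $TX$ is the free $(L{+}L_0)$-algebra on $X$, hence the initial algebra of $G_X := X + (L{+}L_0)(-)$, with structure isomorphism $[\eta_X,\tau^L_X,\tau^{L_0}_X]\colon X + L\,TX + L_0\,TX \xrightarrow{\ \sim\ } TX$, where $\eta$ is the unit and $\tau^L\colon LT\to T$, $\tau^{L_0}\colon L_0T\to T$ are the two halves of the free $(L{+}L_0)$-algebra structure on $T$; moreover $\mu_X\colon TTX\to TX$ is the unique $G_{TX}$-algebra morphism extending $\id_{TX}$. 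Write $\lambda := \tau^L\circ L\eta\colon L\to T$ for the canonical embedding of $L$ into the free monad; ``$\unf$ factors through $\unf^*$'' then means $\unf = \unf^*\circ\lambda$. Abbreviate $R := \id + L_0T$, so $RX = X + L_0\,TX$.

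A plain fold does not work directly: defining $TX\to RX$ by initiality would require making $RX$ a $G_X$-algebra, but the $L$-summand is unmanageable — rewriting $L\,RX\to L\,TX\xrightarrow{\unf_{TX}} TX + L_0\,TTX$ and flattening the second component by $\mu$ leaves $TX + L_0\,TX$, whose first component must land in $X$, not $TX$. This is the genuine obstacle: the ``exit the loop'' branch of $\unf$ returns an arbitrary term of $TX$, not a variable, so one unfolding step is not enough and one must keep unfolding. The standard remedy is to fold into $P_X := TX \times RX$, carrying each subterm along with its already-computed $R$-value. Concretely, I would equip $P_X$ with the $G_X$-algebra structure whose first component is $[\eta_X,\tau^L_X,\tau^{L_0}_X]$ precomposed with the projections $\pi_1$, and whose second component is: on the $X$-summand, $\inl\colon X\to RX$; on the $L_0$-summand, $\inr\circ L_0\pi_1$; and on the $L$-summand, the composite
\[ L\,P_X \xrightarrow{\ \unf_{P_X}\ } P_X + L_0\,TP_X \xrightarrow{\ \pi_2\, +\, L_0(\mu_X\circ T\pi_1)\ } RX + L_0\,TX \xrightarrow{\ [\id,\inr]\ } RX . \]
Let $\langle v_X,\unf^*_X\rangle\colon TX\to P_X$ be the unique $G_X$-algebra morphism. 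Since the structure on $P_X$ was chosen so that $\pi_1\colon P_X\to TX$ is a $G_X$-algebra morphism, $v_X=\pi_1\circ\langle v_X,\unf^*_X\rangle$ is a $G_X$-endomorphism of the initial algebra $TX$, hence $v_X=\id_{TX}$; this legitimises the recursive occurrence of $\unf^*$ in the clause above, and I read off $\unf^* := \pi_2\circ\langle v,\unf^*\rangle\colon T\to R$.

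Two things remain. Naturality of $\unf^*$ in $X$ is routine: every ingredient ($\eta,\mu,\tau^L,\tau^{L_0},\unf$, the projections and coprojections, and the (co)product isomorphisms) is natural, so by the uniqueness half of the universal property $(Tf\times Rf)\circ\langle v_X,\unf^*_X\rangle$ and $\langle v_{X'},\unf^*_{X'}\rangle\circ Tf$ are both $G_X$-algebra morphisms $TX\to P_{X'}$ and hence coincide. For the factorisation, I would extract from ``$\langle v,\unf^*\rangle$ is a $G_X$-morphism'' (via its $L$-component), naturality of $\unf$, and $\pi_1\circ\langle v,\unf^*\rangle=\id$ (whence $\mu_X\circ T\pi_1\circ T\langle v,\unf^*\rangle=\mu_X$), the equation $\unf^*_X\circ\tau^L_X = [\id,\inr]\circ(\unf^*_X + L_0\mu_X)\circ\unf_{TX}$; precomposing with $L\eta_X$ and using naturality of $\unf$ along $\eta_X$, the identity $\unf^*_X\circ\eta_X=\inl$, and the monad law $\mu_X\circ T\eta_X=\id_{TX}$, the right-hand side collapses via $(\unf^*_X+L_0\mu_X)\circ(\eta_X+L_0T\eta_X)=\inl+\id_{L_0TX}$ and $[\id,\inr]\circ(\inl+\id)=\id_{RX}$ to give $\unf^*_X\circ\lambda_X=\unf_X$. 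The only real subtlety is the one flagged above — recognising that $R$ is \emph{not} naturally an $(L{+}L_0)$-algebra, so a plain initial-algebra fold is unavailable and one must fold into $TX\times RX$; everything after that is bookkeeping with coproducts and the monad laws.
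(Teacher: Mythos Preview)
Your argument is correct, but it takes a more elaborate route than the paper's. You claim that $R\A = \A + L_0T\A$ cannot be made into an $(L{+}L_0)$-algebra directly, forcing the paramorphism into $T\A\times R\A$; however, this is not so. Your failed attempt first embeds $R\A$ into $T\A$ and then applies $\unf_{T\A}$, landing in $T\A + L_0TT\A$, whose first summand is indeed too large. The paper instead applies $\unf$ at $R\A$ itself: the $L$-part of the algebra structure is
\[
L(R\A)\xrightarrow{\ \unf_{R\A}\ } R\A + L_0T(R\A)\xrightarrow{\ \id + L_0T\iota\ } R\A + L_0TT\A \xrightarrow{\ \id + L_0\mu\ } R\A + L_0T\A \xrightarrow{\ [\id,\inr]\ } R\A,
\]
where $\iota\colon R\A\to T\A$ is the canonical embedding. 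Now the first summand is already $R\A$, so no recursion is needed, and a plain fold from the free $(L{+}L_0)$-algebra $T\A$ suffices. Your paramorphism yields a valid $\unf^*$ (and your verification of naturality and of the factorisation $\unf^*\circ\lambda=\unf$ is clean), but the detour is unnecessary: the ``obstacle'' you flag is an artefact of unfolding at the wrong object.
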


Note that the assumption is satisfied whenever $L$ and $L_0$ are given via a finitary presentation, i.e. as a quotient of a free algebra on some finitary operations. The above proposition allows us to apply our results even in the less restrictive case where the codomain of $\unf$ is expanded to allow formulas of arbitrary depth, and including both one-step and fixpoint modalities. For instance, in the above toy logic, we get a transformation $\unf:L\to \id + L_0L_0L$ sending $\Diamond^{2*}a$ to $a \vee \Diamond\Diamond \Diamond^{2*}a$, and clearly $L_0L$ is a subfunctor of the algebraically-free monad $T$ on $L_0 + L$. 

\begin{proof}
We know that if $T$ is the algebraically free monad on $L_0+L$, then $T\A$ is a free $L_0+L$-algebra on $\A$. Hence, to obtain $\unf^*$, we need to exhibit $\A + L_0T\A$ as an algebra
\[
k:\A + L_0(\A + L_0T\A) + L(\A + L_0T\A)\to \A + L_0T\A
\]
The first component can simply be included; on the second component, we have a natural map $\A + L_0T\A\overset{\inl}{\to} \A + L_0T\A + LT\A\to T\A$ using the free $(L_0+L)$-algebra structure on $T\A$; applying $L_0$ to this map yields a map $L_0(\A + L_0T\A) \to L_0T\A$; finally, for the third component, we get
\[
\begin{tikzcd}
L(\A + L_0T\A)\arrow[r, "\unf"] &
\A + L_0T\A + L_0T(\A + L_0T\A) 
 \arrow[r, "\text{algebra}"] &
\A + L_0T\A + L_0TT\A \\
\arrow[r, "\text{monad}"] &
\A + L_0T\A + L_0T\A\arrow[r, "\text{codiagonal}"] & 
\A + L_0T\A
\end{tikzcd}
\]
The resulting algebra map $\unf^*:T\A \to \A + L_0T\A$ is natural in $\A$ as it is induced by the composition
of natural transformations.
\end{proof}

\section{Initial Algebra Semantics}\label{sec:initsem}

The definition of the semantic map above is somewhat different from the usual definition in terms of initial algebra semantics. 
The reason for this is that a coalgebra map $\gamma:X\to BX$ does not 
directly give rise to an $L$-algebra structure on $PX$. With some additional work, 
we can exhibit the semantic map in this way as well. This section is devoted to 
presenting an initial algebra semantics, and proving it equivalent to the least-solution semantics.

Before we start, however, we will give our reasons for choosing the above construction as fundamental. Firstly, the definition of $\lbk-\rbk$ as a `least solution' corresponds more closely to intuitions about the semantics of (least) fixpoint formulas. Secondly, it yields a direct proof technique via the ordinal approximation sequence. The proof of invariance under behavioral equivalence from section \ref{sec:invariance} would be more involved if one were to work with the initial algebra semantics. 

For the main result of this section, the proof of equivalence between initial and least-solution
semantics, we need the following very mild
categorical assumption.

\begin{assumption}\label{ass:precomp}
    We assume that in $\D$ pre-composition with morphisms distributes over arbitrary joins, i.e., for 
    $g: \A \to \A'$, an index set $I$ and $f_i: \A' \to \A''$ for all $i \in I$ we have
    \[ \bigvee_{i \in I} (f_i \circ g)=(\bigvee_{i \in I} f_i) \circ g \] 
    where the equality means that if the join on the right side of the equation exists, the join on the
    left side also exists and in this case both morphisms are equal. 
\end{assumption}
It can be easily verified that Assumption~\ref{ass:precomp} is always 
satisfied in case the poset structure on $\Hom_\D(\A,\A')$ is 
induced pointwise, as is usually the case.

For the definition of the initial algebra semantics we first recall that any $L$-algebra structure 
on an object $\B$ induces a morphism from the initial $L$-algebra.
\begin{definition}
 For an $L$-algebra $\beta:L \B \to \B$ we let
 $\ulcorner m\urcorner: \Phi \to \B$ denote the unique $L$-algebra morphism from the initial $L$-algebra
 $(\Phi,\alpha)$ to $(\B,\beta)$.
\end{definition}

We now define a suitable $L$-algebra structure on the algebra of predicates.
\begin{definition}
Let $\gamma:X\to BX$ be a coalgebra. We define $m_\gamma:LPX\to PX$ as the least map $m$ making
\[
\begin{tikzcd}
LPX\arrow[rr, "m"]\arrow[dd, "\unf" '] && PX\\
&& PX + PBX\arrow[u, "{[PX,P\gamma]}" ']\\
PX + L_0LPX \arrow[rr, "PX + L_0m"] && PX + L_0PX\arrow[u, "PX+\delta" ']
\end{tikzcd}
\]
commute.
We define $\|-\|:\Phi\to PX$ as $\ulcorner m_\gamma \urcorner$, i.e., as
the unique $L$-algebra morphism; that is, the unique map making the following diagram commute
\[
\begin{tikzcd}
\Phi\arrow[r, "\|-\|"] &PX\\
L\Phi\arrow[u, "\alpha"]\arrow[r, "L\|-\|"] & LPX\arrow[u, "m_\gamma"]
\end{tikzcd}
\]
\end{definition}

Similarly as for the semantic map $\smap$ it is easy to see that $m_\gamma$ - and thus $\|-\|$ -  is well-defined as $m_\gamma$ is the least fixpoint of the {\em monotone} operator
$\Opinit$ given by $\Opinit (m) \mathrel{:=} [PX,P\gamma \circ \delta] \circ (PX
+ L_0 m) \circ \unf$.
We wish to prove that $\|-\| = \lbk-\rbk$. We first consider the easier direction of the statement.

\begin{lemma}\label{lem:sem1}
   We have $\|-\| \geq \lbk-\rbk$.
\end{lemma}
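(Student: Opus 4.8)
To show $\|-\| \geq \lbk-\rbk$, I will exploit that $\lbk-\rbk$ is the \emph{least} fixpoint of $\Op_\gamma$. So it suffices to show that $\|-\|$ is a \emph{fixpoint} (or even just a pre-fixpoint) of $\Op_\gamma$, i.e.\ that $\Op_\gamma(\|-\|) \leq \|-\|$; then leastness of $\lbk-\rbk$ gives $\lbk-\rbk \leq \|-\|$ immediately. The key observation is that $\|-\|$ is built as $\fold{m_\gamma}$, the unique $L$-algebra morphism from $(\Phi,\alpha)$ to $(PX, m_\gamma)$, and $m_\gamma$ itself is the least fixpoint of the monotone operator $\Opinit$ with $\Opinit(m) = [PX, P\gamma\circ\delta]\circ(PX + L_0 m)\circ\unf$.

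\textbf{Key steps.} First I would unfold the definitions: by the defining diagram for $\|-\|$ we have $\|-\| = m_\gamma \circ L\|-\| \circ \alpha^{-1}$, and since $m_\gamma$ is a fixpoint of $\Opinit$ we may substitute $m_\gamma = [PX,P\gamma\circ\delta]\circ(PX + L_0 m_\gamma)\circ\unf$. Composing along the defining diagram of $\Op_\gamma$ from Definition~\ref{def:smap}/\eqref{equ:Op}, I want to compare the two resulting expressions for a map $\Phi \to PX$. The substantive point is naturality of $\unf$ at the component $\|-\|:\Phi \to PX$, which gives $(\id + L_0 L\|-\|)\circ \unf_\Phi = \unf_{PX}\circ L\|-\|$ (up to the $\id + L_0 L$ bookkeeping), together with the commuting square for $m_\gamma$. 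Chasing these, one sees that $\Op_\gamma(\|-\|)$ and $\|-\|$ are obtained from essentially the same composite, the only slack being that $m_\gamma$ was defined as the \emph{least} solution of its diagram whereas in $\Op_\gamma$ we feed in $m_\gamma$ directly; hence $\Op_\gamma(\|-\|) = \|-\|$ on the nose, or at worst $\Op_\gamma(\|-\|) \leq \|-\|$, which is all we need. Then conclude by leastness of $\lbk-\rbk$.

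\textbf{Main obstacle.} The fiddly part is the diagram chase reconciling the two presentations of the unfolding: in Definition~\ref{def:smap} the unfolding is applied to $L\Phi$ via $\alpha^{-1}$ and then $\id + L_0\alpha$ is applied to land in $\Phi + L_0\Phi$, whereas in the $m_\gamma$ diagram the unfolding is applied to $LPX$ and the inner coalgebra-like map is $L_0 m_\gamma$ rather than $L_0 t$ composed with $L_0 \alpha$. Bridging these requires carefully using that $\|-\|$ is an $L$-algebra morphism ($\|-\| \circ \alpha = m_\gamma \circ L\|-\|$) to push $\alpha$ across and using naturality of $\unf$ to commute it past $L\|-\|$. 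Once the types line up this is routine, but writing it so the coproduct injections and the $\id + L_0(-)$ functor actions match up correctly is where care is needed; I expect the cleanest route is to verify that $\|-\|$ makes the large diagram in Definition~\ref{def:smap} commute (hence is \emph{a} solution $t$), from which $\lbk-\rbk \leq \|-\|$ is immediate by definition of $\lbk-\rbk$ as the least such solution.
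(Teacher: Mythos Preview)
Your proposal is correct and follows essentially the same approach as the paper: the paper's proof also shows that $\|-\|$ makes the defining diagram of $\smap$ commute, using exactly the three ingredients you identify (naturality of $\unf$, the $L$-algebra morphism property $\|-\|\circ\alpha = m_\gamma\circ L\|-\|$, and the fixpoint equation for $m_\gamma$), and then concludes by leastness of $\smap$. Your final sentence already names the cleanest route, which is precisely what the paper does; the earlier talk of pre-fixpoints is equivalent but slightly indirect.
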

\begin{proof}[Sketch]
The lemma can be proven by showing that $\|-\|$ makes the definition square of $\smap$ commute.
As $\smap$ is the least such map, the claim follows. 
\end{proof}

In order to prove the other inequality, we first prove the following lemma.

\begin{lemma}\label{lem:sem2}
    Let $\{\Opinit^i\}_{i \in \Ord}$ denote the approximants of $m_\gamma$ in $\Hom_\D(LPX,PX)$. We have that
    $\fold{\Opinit^i} \leq \smap$ for all ordinals $i \in \Ord$.
\end{lemma}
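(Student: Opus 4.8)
The plan is to reduce the statement to a single transfinite induction on $i$ establishing the inequality $\Opinit^i \circ L\smap \leq \smap \circ \alpha$ in $\Hom_\D(L\Phi, PX)$, and then to read off the lemma from the universal property defining $\fold{-}$. Recall that for any morphism $m \colon LPX \to PX$ the map $\fold{m}$ is the unique $h\colon \Phi \to PX$ with $h\circ\alpha = m \circ Lh$; since $\alpha$ is an isomorphism and $L$ is $\Poset$-enriched, this says exactly that $\fold{m}$ is the unique, hence least, fixpoint of the monotone operator $\Psi_m \colon h \mapsto m \circ Lh \circ \alpha^{-1}$ on the complete lattice $\Hom_\D(\Phi, PX)$ (which exists by Assumption~\ref{ass:P}). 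So once we know $\Opinit^i \circ L\smap \leq \smap \circ \alpha$, post-composing with $\alpha^{-1}$ gives $\Psi_{\Opinit^i}(\smap) \leq \smap$, i.e.\ $\smap$ is a pre-fixpoint of $\Psi_{\Opinit^i}$, and Knaster--Tarski yields $\fold{\Opinit^i} = \mathrm{lfp}(\Psi_{\Opinit^i}) \leq \smap$.

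It then remains to prove $\Opinit^i \circ L\smap \leq \smap \circ \alpha$ by transfinite induction on $i$. In the base case, $\Opinit^0$ is the bottom element of $\Hom_\D(LPX, PX)$, and precomposition with $L\smap$ preserves it by Assumption~\ref{ass:precomp} (applied to the empty family), so $\Opinit^0 \circ L\smap = \bot \leq \smap \circ \alpha$. For a limit ordinal $i$ we have $\Opinit^i = \bigvee_{j < i}\Opinit^j$, and Assumption~\ref{ass:precomp} lets us pull the precomposition inside the join, $\Opinit^i \circ L\smap = \bigvee_{j<i}(\Opinit^j \circ L\smap)$, which is bounded by $\smap\circ\alpha$ term by term using the induction hypothesis. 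The successor case $i = j+1$ is the substantive one. Unfolding $\Opinit^{j+1} = \Opinit(\Opinit^j) = [PX, P\gamma\circ\delta]\circ (PX + L_0\Opinit^j)\circ \unf_{PX}$, rewriting $\unf_{PX}\circ L\smap = (\smap + L_0 L\smap)\circ \unf_\Phi$ by naturality of $\unf$ at $\smap$, and simplifying the composite of coproduct maps to $\smap + L_0(\Opinit^j\circ L\smap)$, one obtains $\Opinit^{j+1}\circ L\smap = [PX, P\gamma\circ\delta]\circ(\smap + L_0(\Opinit^j\circ L\smap))\circ\unf_\Phi$. Now the induction hypothesis $\Opinit^j\circ L\smap \leq \smap\circ\alpha$, together with monotonicity of $L_0$ (it is enriched), of the cotupling operation on the enriched coproduct, and of composition, bounds this above by $[PX, P\gamma\circ\delta]\circ(\smap + L_0(\smap\circ\alpha))\circ\unf_\Phi$. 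Using the identities $[PX,P\gamma]\circ(\id + \delta) = [PX, P\gamma\circ\delta]$ and $(\smap + L_0\smap)\circ(\id + L_0\alpha) = \smap + L_0(\smap\circ\alpha)$, this last expression is precisely $\Op_\gamma(\smap)\circ\alpha$, and since $\smap$ is a fixpoint of $\Op_\gamma$ it equals $\smap\circ\alpha$, closing the induction.

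I expect the only real obstacle to be the bookkeeping in the successor step: one must track the two coproduct summands carefully through $\unf$, the naturality square, and the functor $L_0$, and then recognise the resulting expression as $\Op_\gamma(\smap)\circ\alpha$. It is worth stressing that the monotonicity is used in the direction that makes the argument go through: we bound $\Opinit^i\circ L\smap$ from \emph{above} by $\smap\circ\alpha$, which is what places $\fold{\Opinit^i}$ \emph{below} the least fixpoint $\smap$; pushing the comparison the other way would instead produce greatest-fixpoint bounds. Everything else --- existence of the least fixpoints involved, monotonicity of $\Psi_m$ and of $\Opinit$ --- follows from Assumption~\ref{ass:P}, Assumption~\ref{ass:precomp}, and the enrichment hypotheses, exactly as in the well-definedness arguments already given for $\smap$ and $m_\gamma$.
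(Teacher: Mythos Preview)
Your proposal is correct and follows essentially the same route as the paper: both prove by transfinite induction that $\Opinit^i \circ L\smap \circ \alpha^{-1} \leq \smap$ (your formulation $\Opinit^i \circ L\smap \leq \smap \circ \alpha$ is equivalent), using naturality of $\unf$, enrichedness of $L_0$, Assumption~\ref{ass:precomp} at limit stages, and the fixpoint equation for $\smap$ in the successor step, and then conclude via the observation that $\fold{\Opinit^i}$ is the least (pre-)fixpoint of $g\mapsto \Opinit^i\circ Lg\circ\alpha^{-1}$. The only cosmetic difference is the order in which you apply naturality of $\unf$ and the induction hypothesis in the successor case.
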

\begin{proof}
We first prove that for all ordinals $i$ we have
\begin{equation}\label{equ:lax}
\Opinit^i \circ L \smap \circ \alpha^- \leq \smap
\end{equation}
In other words, we show that
$\smap$ is a pre-fixpoint of the following monotone operator on
$\Hom_\D(\Phi, PX)$ given by$
g\mapsto \Opinit^i \circ Lg\circ \alpha^{-1}$.
The claim of the lemma follows that by the fact that
$\fold{\Opinit^i}$ is the unique fixpoint of the above operator and thus also the smallest pre-fixpoint.
The proof of~(\ref{equ:lax}) is by ordinal induction on $i$. For $i=0$ the claim is trivial. For the successor ordinal case 
we calculate:
\begin{eqnarray*}
    \smap &\stackrel{\mbox{\tiny Def.}}{=}& 
    [PX,P\gamma \circ \delta] \circ (\smap+L_0L \smap) \circ (\Phi + L_0 \alpha) 
    \circ \unf \circ \alpha^- \\
         & \stackrel{\mbox{\tiny (*)}}{\geq} & 
         [PX,P\gamma \circ \delta] \circ (PX+L_0 \Opinit^i) \circ (\smap+L_0L \smap) 
         \circ \unf \circ \alpha^- \\
         & \stackrel{\mbox{\tiny nat of $\unf$}}{=} &
         [PX,P\gamma \circ \delta] \circ (PX+L_0 \Opinit^i) \circ \unf \circ L \smap \circ \alpha^- \\
         & \stackrel{\mbox{\tiny Def of $\Opinit^{i+1}$}}{=} & \Opinit^{i+1} \circ L \smap \circ \alpha^-
\end{eqnarray*}
Here (*) is a consequence of the I.H. and the fact that composition in $\D$ is monotonic. 
For the limit case consider $\Opinit^i = \bigvee_{j<i} \Opinit^j$. We have:
\[ \smap \stackrel{\mbox{\tiny I.H.}}{\geq} \bigvee_{j<i} \left(\Opinit^{j} 
\circ L \smap \circ \alpha^-\right)  \stackrel{\mbox{\tiny Ass.}}{\geq} \left(\bigvee_{j<i}\Opinit^{j} \right) \circ L \smap \circ \alpha^- \]
where for the second inequality we made use of Assumption~\ref{ass:precomp} that pre-composition of morphisms in $\D$ distributes
over joins.
\end{proof}

We are now ready to prove the main result of this section.
\begin{theorem}
    Let $(X,\gamma)$ be a $B$-coalgebra.
    and let $(\unf,\delta)$ be an unfolding system for
    $B$.
    The semantic
    map $\smap_\gamma: \Phi \to PX$ coincides 
    with the initial algebra map 
    $\initsem{-}_X$
    from $(\Phi,\alpha)$ to $(PX,m_\gamma)$
\end{theorem}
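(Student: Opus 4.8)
The plan is to sandwich $\initsem{-}_X$ between two inequalities, one of which is already Lemma~\ref{lem:sem1} and the other of which follows from Lemma~\ref{lem:sem2} by passing to the limit of the approximation sequence. Concretely, Lemma~\ref{lem:sem1} gives $\initsem{-}_X \geq \smap_\gamma$ directly, so the only work is the converse inequality $\initsem{-}_X \leq \smap_\gamma$.

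For that converse, I would argue as follows. The map $m_\gamma : LPX \to PX$ is, by construction, the least fixpoint of the monotone operator $\Opinit$ on the complete lattice $\Hom_\D(LPX, PX)$. By Kleene's theorem in the Cousot–Cousot form recalled in Section~\ref{sec:fix}, $m_\gamma = \bigvee_{i \in \Ord} \Opinit^i$, and since $\D$ is locally small the homset $\Hom_\D(LPX,PX)$ is a genuine set, so this ordinal-indexed ascending chain stabilises: there is an ordinal $\lambda$ with $\Opinit^\lambda = m_\gamma$. Lemma~\ref{lem:sem2} tells us that $\fold{\Opinit^i} \leq \smap_\gamma$ for every ordinal $i$; instantiating at $i = \lambda$ yields $\initsem{-}_X = \fold{m_\gamma} = \fold{\Opinit^\lambda} \leq \smap_\gamma$. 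Combining the two inequalities gives $\initsem{-}_X = \smap_\gamma$, as required.

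I do not anticipate a genuine obstacle here, since the two lemmas carry essentially all the content; the only point that needs a little care is making sure the $\fold{-}$ construction is legitimately applied to the approximants $\Opinit^i$. This is fine because each $\Opinit^i$ is a bona fide $L$-algebra structure $LPX \to PX$, so $\fold{\Opinit^i} : \Phi \to PX$ is the well-defined unique $L$-algebra morphism out of the initial algebra, and at the stabilisation ordinal it coincides with $\fold{m_\gamma} = \initsem{-}_X$. One should also note, as in the invariance proof of Section~\ref{sec:invariance}, that the stabilisation ordinal can be taken uniformly large enough for both $m_\gamma$ and (if one also wants to reuse the approximation sequence of $\smap_\gamma$) for $\Op_\gamma$, but for the present statement only the stabilisation of $\Opinit$ is needed.
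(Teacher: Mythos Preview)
Your proposal is correct and matches the paper's own proof essentially line for line: one inequality is Lemma~\ref{lem:sem1}, and the other is obtained by picking an ordinal $\lambda$ at which the approximation sequence for $\Opinit$ stabilises and invoking Lemma~\ref{lem:sem2} at that ordinal. The additional remarks you make about local smallness and about $\fold{\Opinit^i}$ being well-defined are fine and consistent with the paper's conventions.
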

\begin{proof}
    By Lemma~\ref{lem:sem1} we have $\smap_\gamma \leq \initsem{-}_X$.
    For the converse direction note that we have
    $m_\gamma = \Opinit^i$ for some $i \in \Ord$.
    Therefore we can apply Lemma~\ref{lem:sem2} as follows:
    $\initsem{-}_{PX} = \fold{m_\gamma} = \fold{\Opinit^i}
    \stackrel{\mbox{\tiny Lem.~\ref{lem:sem2}}}{\leq} 
    \smap_\gamma$
\end{proof}

\section{Correctness for Positive Alternation-Free Coalgebraic Fixpoint Logics}\label{sec:afcfl}

In this section, we show how our approach can express the alternation-free fragment of the coalgebraic $\mu$-calculus as defined in \cite{cirstea2011tableaux}. More specifically, we give a pair of logical functors $(L,L_0)$ and an unfolding system $(\unf,\delta)$ such that the initial $L$-algebra $\Phi$ contains `enough' formulas of the $\mu$-calculus, and the induced semantics $\smap$ coincides with the ususal semantics for the $\mu$-calculus. 

Before we can give the unfolding system, we need to define a special syntax for the $\mu$-calculus. The reason for this is that our setup is most suited for logics with fixpoint \emph{modalities}; however, in the $\mu$-calculus, there is no syntactical distinction between formulas, and fixpoint modalities applied to a formula. 

In order to get the $\mu$-calculus into a more amenable form, we use a presentation similar to the flat coalgebraic fixpoint logics from \cite{schroder2010flat}. This approach can also be compared to PDL, which similarly has a syntactic distinction between \emph{formulas} and \emph{programs}, corresponding to our fixpoint schemes. 

Fix a countable set $V$ of `parametric variables'. Let $\Lambda$ be modal similarity type, i.e. a countable set of modal operators of finite arity. Let $x\notin V$ be a designated `fixpoint variable'. Then we define by mutual induction the formulas $\phi$ and fixpoint schemes $\gamma$ as follows:
\begin{align*}
\gamma &::= v\in V\mid x\mid \phi\mid \gamma\vee\gamma\mid\gamma\wedge\gamma\mid \heart(\bar\gamma)\mid \sharp_{\gamma}(\bar\gamma/\bar v)\mid \flat_{\gamma}(\bar\gamma/\bar v)\\
\phi &::=\top\mid\bot\mid \phi\vee\phi\mid\phi\wedge\phi\mid \neg\phi \mid \heart(\bar\phi)\mid \sharp_{\gamma}({\bar\phi}/{\bar v})\mid \flat_\gamma(\bar\phi/\bar v)
\end{align*}
where the notation $\bar\gamma,\bar \phi$ indicates that operators might
have lists of fixpoint schemes/formulas as arguments. The intended reading of $\sharp_\gamma$ and $\flat_\gamma$ is as taking the least and greatest fixpoints respectively of the operator defined by $\gamma$. We also define the set of \emph{guarded} fixpoint schemes via
\[
\gamma_g ::= v\in V\mid \gamma_g\vee\gamma_g\mid \gamma_g\wedge\gamma_g\mid \heartsuit(\bar\gamma)
\]
That is, a guarded fixpoint scheme is a propositional combination of parametric variables and one-step modalities applied to (non-guarded) fixpoint schemes.

 Note that negations may only occur in formulas, not fixpoint schemes; hence, the fixpoint variable $x$ only ever occurs positively. Note also that on formulas, the `greatest fixpoint operator' $\flat_\gamma$ may be defined in terms of $\sharp_\gamma$ via
$
\flat_\gamma(\bar \phi/\bar v) := \neg \sharp_\gamma(\overline{\neg\phi}/{\bar v})
$.
We write CFL for the above coalgebraic fixpoint logic and CFS for the set of fixpoint schemes. We will also write CFS${}_\sharp$ for the set of fixpoint schemes not containing $\flat$, CFL${}_\sharp$ for the set of formulas not containing $\flat$, and CFL${}_\sharp^+$ for the set of formulas not containing $\flat$ or $\neg$. 

\begin{definition}\label{def:musicalsemantics}
Let $B$ be a $\Sets$-functor, and fix for every
 modality $\heart$ of arity $n$ a monotone predicate lifting 
 $\lds \heart\rds:P^n\to PB$. Then for a $B$-coalgebra 
 $\sigma:X\mapsto BX$, we define by mutual recursion the 
 semantics $\lbk \phi\rbk$ of a formula, 
 and $\lbk\gamma\rbk_p^\xi$ of a fixpoint scheme, where $p:V\to PX$ 
 is a valuation of the parametric variables, 
 and $\xi\subseteq PX$ is the value of the fixpoint variable
\begin{align*}
&\lbk \phi\rbk_p^\xi := \lbk \phi\rbk&\\
&\lbk v\rbk_p^\xi := p(v)& \lbk \top\rbk := S\\
&\lbk x\rbk_p^\xi := \xi &\lbk \bot\rbk := \varnothing\\
&\lbk\gamma\vee\delta\rbk_p^\xi := \lbk \gamma\rbk_p^\xi \cup \lbk\delta\rbk_p^\xi&\lbk \phi \vee\psi\rbk := \lbk \phi\rbk\cup \lbk \psi\rbk\\
&\lbk\gamma\wedge\delta\rbk_p^\xi := \lbk \gamma\rbk_p^\xi \cap \lbk\delta\rbk_p^\xi&\lbk \phi \wedge\psi\rbk := \lbk \phi\rbk\cap \lbk \psi\rbk\\
&\lbk \heart(\bar\gamma)\rbk_p^\xi := \sigma^{-1}\left(\lds\heart\rds(\lbk \bar\gamma\rbk_p^\xi)\right)&\lbk \heart(\bar\phi)\rbk := \sigma^{-1}\left(\lds\heart\rds(\lbk \bar\phi\rbk)\right)\\
&\lbk\sharp_\gamma(\bar\delta/\bar v)\rbk_p^\xi:= \mu\left(U\mapsto \lbk \gamma\rbk_{\lbk \bar\delta\rbk_p^\xi/\bar v}^U\right)&\lbk \sharp_\gamma(\bar \phi/\bar v)\rbk:= \mu\left(U \mapsto \lbk\gamma\rbk_{\lbk \bar\phi\rbk/\bar v}^U\right)\\
&\lbk\flat_\gamma(\bar\delta/\bar v)\rbk_p^\xi:= \nu\left(U\mapsto \lbk \gamma\rbk_{\lbk \bar\delta\rbk_p^\xi/\bar v}^U\right)&\lbk \flat_\gamma(\bar \phi/\bar v)\rbk:= \nu\left(U \mapsto \lbk\gamma\rbk_{\lbk \bar\phi\rbk/\bar v}^U\right)
\end{align*}
\end{definition}

We claim that CFL is equivalent to the coalgebraic $\mu$-calculus (CMC), and CFL${}_\sharp$ is equivalent to the alternation-free fragment of the coalgebraic $\mu$-calculus (AFCMC). 
We will illustrate by showing how to express two $\mu$-calculus formulas in CFL. 

\begin{example}\label{ex:mucalc1}
As a simple example, consider the PDL formula $\Diamond^*p$. In the $\mu$-calculus, this formula may be expressed as
$
\phi := \mu X. p \vee \Diamond X
$.
We can express $\phi$ in CFL as follows:
$
\sharp_\gamma(p/v)$, where $\gamma(v;x) := v \vee \Diamond x
$.
Note that this is not exactly the formula generated by the translation; in fact we have
\[
\trl(\phi) = \sharp_\delta(),\qquad\text{ where }\delta(x) = p \vee \Diamond x
\]
We have chosen $\sharp_\gamma(p/v)$ as our intuitive translation, since it highlights how the fixpoint \emph{modality} $\Diamond^*$ may be expressed as the modality $\sharp_\gamma$. 
\end{example}

\begin{example}\label{ex:mucalc2}
As a second example, we look at the formula
\[
\phi = \mu X . p \wedge \mu Y . \left((q\wedge \Diamond Y)\vee (r\wedge \square X)\right).
\]
This example exhibits intrinsic nesting of fixpoints. It is guarded as a formula of the $\mu$-calculus; however, in CFL, guardedness also requires that all fixpoint \emph{quantifiers} (other than the first) appear directly under a modality. So, let $\psi$ be the formula bound by $Y$ - i.e., $\psi = (q\wedge \Diamond Y)\vee (r\wedge \square X)$. Clearly, $\phi$ is equivalent to
\[
\mu X . \big(p\wedge \left((q\wedge \Diamond (\mu Y. \psi))\vee (r\wedge \square X)\right)\big)
\]
and in this formula, all but the outer quantifier appear guarded. Now translation is a straightforward affair: consider
\begin{align*}
\sharp_\gamma() & \text{ where }\\
& \gamma(x) := p\wedge \left((q\wedge \Diamond \sharp_\delta(x))\vee (r\wedge \square x)\right)\\
& \delta(v;x) := (q\wedge \Diamond x)\vee (r\wedge \square v)
\end{align*}
Then $\binom{\sharp_\gamma()}{\sharp_\delta(\sharp_\gamma()/v)}$ is the (mutual) least fixed point of
\[
\binom{X}{Y}\mapsto \binom{p\wedge ((q\wedge \Diamond Y)\vee (r\wedge \square X))}{(q\wedge \Diamond Y)\vee (r\wedge \square X)}
\]
which shows that $\sharp_\gamma()$ has the same interpretation as $\phi$. 
\end{example}

Next, we show how CFL${}_\sharp^+$ fits into the dual-adjunction picture. 
We take $\C$ to be the category $\Sets$, and $\D$ to be the category $\DLatt$ of distributive lattices. 
$\DLatt$ is enriched over $\Poset$ by the pointwise order. 
For the adjunction $P \dashv Q$ we take the adjunction from Example~\ref{ex:positivemodal} in Section~\ref{sec:onestep}.

Our behavior functor $B:\Sets\to\Sets$ was given, together with a set of predicate liftings $\{\lds \heartsuit \rds\mid \heartsuit\in\Lambda\}$. Similarly to the one-step logic from Example~\ref{ex:positivemodal}, we take $L_0$ to be given by
\[
L_0\A := \Free(\{\heart(a_1,\dots, a_n)\mid a_i\in A, \heart \in \Lambda\text{ of arity }n\})/{\approx}
\]
where $\approx$ is the least congruence such that if 
$a\leq b$, then $\heartsuit a \leq \heartsuit b$. 
Now $\lds-\rds$ can be used to define a one-step semantics $\delta: L_0P\to PB$ (cf.~\cite[Ex.~3.11]{KupkeP11}). 
Next, we set
\begin{eqnarray*}
L\A & := & \Free(\{\sharp_\gamma(\bar a/\bar v)\mid \gamma\in \text{CFS}_\sharp^\mu, \gamma\text{ guarded}\})/{\approx}
\end{eqnarray*}
Note that $L_0 + L$ generates an algebraically-free monad $T$, the monad of \emph{$L_0 + L$-terms}. Since each $\gamma$ is an $L_0 + L$-term in free variables $\bar v\cup x$, we obtain a natural transformation $\unf:L\to T$ via
$
\unf:\sharp_\gamma(\bar a/\bar v)\mapsto \gamma(\bar a/\bar v, \sharp_\gamma(\bar a/\bar v)/x)
$.
Since we stipulate that $\gamma$ is guarded, we know that $\unf$ will in fact factor through $\id + L_0T$. Hence, using proposition \ref{prop:depth}, we are able to define the desired semantic maps $\lbk-\rbk$. 

\begin{proposition}\label{prop:musicalcorrect}
Let $\sigma:X\to BX$ be a $B$-coalgebra, and let $\lbk-\rbk:\Phi\to PX$ be the semantic map induced by the unfolding system $(\delta, \unf)$. Then $\lbk\phi\rbk$ is the subset of $X$ defined as in definition \ref{def:musicalsemantics}. 
\end{proposition}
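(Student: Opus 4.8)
The plan is to reduce everything to the least-solution semantics and then prove the two inequalities $\smap \le d$ and $d \le \smap$, where $d : \Phi \to PX$ denotes the $\DLatt$-morphism sending each (translated) formula to the subset prescribed by Definition~\ref{def:musicalsemantics}. By the main theorem of Section~\ref{sec:initsem} it is enough to work with $\smap$, which — via Proposition~\ref{prop:depth}, for the unfolding system $(\unf^*,\delta)$ with $\unf^* : T \to \id + L_0 T$ — is the least fixpoint of the monotone operator from~\eqref{equ:Op}, so $\smap = \bigvee_{i \in \Ord}\Op_\sigma^i$. The recurring technical ingredient is a \emph{compatibility lemma} relating the categorical machinery to the clauses of Definition~\ref{def:musicalsemantics}: namely, $[PX,P\sigma]\circ\delta$ sends a modal generator $\heart(\bar a)$ of $L_0 PX$ to $\sigma^{-1}(\lds\heart\rds(\bar a))$ (immediate from the construction of $\delta$ out of the liftings $\lds\heart\rds$), and the least map $m_\sigma$ of Section~\ref{sec:initsem} sends a fixpoint generator $\sharp_\gamma(\bar b/\bar v)$ to the pointwise least fixpoint $\mu(U \mapsto \lbk\gamma\rbk^U_{\bar b/\bar v})$ — the latter by a Knaster--Tarski argument comparing the diagram defining $m_\sigma$ with the Kleene construction in the complete lattice $PX$, using the guardedness (positivity of $x$) of $\gamma$. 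Since $\gamma$ may itself contain nested fixpoint generators, these facts are established simultaneously, by induction on the structure of fixpoint schemes.

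For $\smap \le d$ it suffices, as $\smap$ is the \emph{least} solution of the square in Definition~\ref{def:smap}, to check that $d$ is a solution. That $d$ is a well-defined $\DLatt$-morphism is clear (it preserves $\top,\bot,\vee,\wedge$ by construction and respects the congruence $\approx$ because predicate liftings and $\mu$ are monotone in their arguments). Commutativity of the square says $d$ intertwines with $\unf^*$; evaluated on a generator $\sharp_\gamma(\bar\psi/\bar v)$ and unwound through the compatibility lemma, this is exactly the fixpoint identity $\mu F = F(\mu F)$ for $F = (U \mapsto \lbk\gamma\rbk^U_{\lbk\bar\psi\rbk/\bar v})$.

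The converse $d \le \smap$ follows the template of Example~\ref{ex:diamondstar} and Lemma~\ref{lem:sem2}. Fix a formula $\phi_0$ and let $\Gamma$ be its unfolding closure — the set of formulas and fixpoint-scheme instances reachable from $\phi_0$ by iterating $\unf^*$. Since fixpoint schemes are built by well-founded mutual recursion (so the ``uses'' relation among the modalities $\sharp_\gamma$ is acyclic) and every scheme bound by a $\sharp$ is guarded, $\Gamma$ is finite — a Fischer--Ladner-style closure argument familiar from PDL and the $\mu$-calculus — and $\Gamma$ carries a finite system of positive (monotone) equations over $PX$, given componentwise by one-step unfolding. Now: (i) by the compatibility lemma, any solution $t$ of the defining square restricts to a fixpoint of this finite system, so its simultaneous least fixpoint $\ell \in (PX)^\Gamma$ satisfies $\ell \le t|_\Gamma$, and in particular $\ell \le \smap|_\Gamma$; (ii) because CFL${}_\sharp^+$ contains only $\sharp$ (least) fixpoints, the Beki\'c identity — more precisely, the standard fact that nested and simultaneous least fixpoints of a monotone system in a complete lattice coincide, together with the diagonal rule — identifies $\ell$ with the tuple of nested least fixpoints, i.e.\ $\ell = d|_\Gamma$. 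Hence $d(\phi_0) = \ell(\phi_0) \le \smap(\phi_0)$, and as $\phi_0$ was arbitrary, $d \le \smap$; combined with the previous paragraph, $\smap = d$.

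The main obstacle is the tension between the single global least fixpoint that the categorical semantics computes over $\Hom_\DLatt(\Phi,PX)$ and the many genuinely interleaved, nested least fixpoints prescribed by Definition~\ref{def:musicalsemantics} (as in Example~\ref{ex:mucalc2}); bridging this is precisely what forces the passage to the finite unfolding closure $\Gamma$ and the appeal to the Beki\'c/diagonal identities, and it is the structural reason the result is confined to a single type of fixpoint. The remaining work — the compatibility lemma (especially identifying $m_\sigma$ on a fixpoint generator with a pointwise $\mu$) and the bookkeeping linking the translation $\trl$ and the guarded normal form of Examples~\ref{ex:mucalc1}--\ref{ex:mucalc2} to elements of the initial algebra of formulas — is routine but not entirely mechanical.
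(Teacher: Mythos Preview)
Your first direction ($\smap\le d$) is the same as the paper's: show that the explicit semantics $d$ makes the defining square commute, hence lies above the least solution.

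For the converse $d\le\smap$ you take a genuinely different route. The paper (following Example~\ref{ex:diamondstar}) sets up an ordinal approximation process for $d$ itself and shows by induction that every approximant lies below \emph{every} solution of the square; taking the supremum gives $d\le\smap$. You instead localise to a finite Fischer--Ladner-style closure $\Gamma$, note that any solution of the square restricts to a fixpoint of the induced finite monotone system on $(PX)^\Gamma$, and then invoke Beki\'c/diagonal to identify the simultaneous least fixpoint of that system with the nested least fixpoints prescribed by Definition~\ref{def:musicalsemantics}. Both arguments are sound. The paper's is lighter and stays within the template already established for $\Diamond^*$; yours is more structural and has the virtue of making the alternation-freeness hypothesis do visible work (Beki\'c collapses simultaneous and nested fixpoints precisely because all fixpoints here are of a single polarity), at the cost of the extra closure and Beki\'c bookkeeping.

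One remark: the part of your ``compatibility lemma'' asserting that $m_\sigma$ sends a generator $\sharp_\gamma(\bar b/\bar v)$ to $\mu\bigl(U\mapsto\lbk\gamma\rbk^U_{\bar b/\bar v}\bigr)$ is never actually used in your argument---neither direction appeals to it. If you do prove it (by the induction on scheme depth you indicate), then $\|{-}\|=d$ follows immediately from initiality and Section~\ref{sec:initsem}, which would make the Fischer--Ladner/Beki\'c detour redundant; conversely, if you keep the Beki\'c argument, that clause can simply be dropped.
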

\begin{proof}[Sketch]
The argument for this is similar to the one given in example \ref{ex:diamondstar}: let $t:\Phi\to PX$ be the explicitly defined semantics. Then $t$ is clearly a solution to the diagram from definition \ref{def:smap}, and hence $t\geq \smap$. For the other direction, we can set up an approximation process for $t$, and prove by induction that each approximant is contained in all solutions. 
\end{proof}

Now since we have shown how to faithfully translate the $\mu$-calculus into CFL, this 
allows us to interpret the negation-free, $\nu$-free fragment of the 
coalgebraic $\mu$-calculus. In section \ref{sec:negations}, we will 
sketch how to extend this to the full alternation-free fragment of the coalgebraic $\mu$-calculus.

\section{Negations}\label{sec:negations}

So far, we have only been working in the positive fragments of the modal logics under consideration. However, many modal (fixpoint) logics come equipped with operations that are not order preserving - think of negation $\neg$ in Boolean algebras, or subtraction in quantitative algebras. We will use the term `negation(s)' as a stand-in for such operations generally. In this section, we sketch how the semantics on a positive fragment may be extended to the full logic with negations.

%

Let $\D^\neg$ be a category of `algebras-with-negations', equipped with a free-forgetful adjunction
$F:\D \to \D^\neg$, $U:\D^\neg \to \D$ with $F \dashv U$. 
This yields a endofunctors $L_0^\neg, L^\neg:\D^\neg\to \D^\neg$ given by $L^\neg = FLU, L_0^\neg = FL_0U$. We also assume that we have an adjunction $P^\neg \dashv Q^\neg$ with $P^\neg:\C\to(\D^\neg)^\op$,
and moreover assume that $UP^\neg = P$. Using this, the map $m$ from Section~\ref{sec:initsem} has type $m:LUP^\neg X\to UP^\neg X$. Hence, we can transpose it along $F\dashv U$ to get
\[
m^\neg:FLUP^\neg = L^\neg P^\neg X\to P^\neg X
\]
So, if we write $\Phi^\neg$ for the initial $L^\neg$-algebra, we obtain an initial algebra semantics for formulas with negations, which respects the already defined semantics for negation-free formulas. 

As an example, consider the logic CFL from Section~\ref{sec:afcfl}. 
As written there, we can only interpret CFL${}_\sharp^+$, corresponding to the fragment of CMC without greatest fixpoints and negations. However, if we take $\D^\neg$ to be $\BA$, 
the category of Boolean algebras, we see that the adjunction $P_\DLatt \dashv Q$ factors as 
$U P_\BA \dashv \uf F$, where $U:\BA\to\DLatt$ is forgetful, $F:\DLatt\to \BA$ is free,
$P_\DLatt$ and $P_\BA$ are the powerset functors into $\DLatt$ and $\BA$, respectively,  
and $P_\BA \dashv \uf$ is the well-known powerset-ultrafilter adjunction. 

By the above discussion, we also get an interpretation of $L^\neg$-formulas in $P_\BA X$ for any coalgebra $\xi:X\to BX$. The concrete analogue of this shift consists of adding in negations to the syntax of formulas (but not fixpoint schemes)
\begin{align*}
\gamma &::= v\in V\mid x\mid \phi\mid \gamma\vee\gamma\mid\gamma\wedge\gamma\mid \heart(\bar\gamma)\mid \sharp_{\gamma}(\bar\gamma/\bar v)\\
\phi &::=\top\mid\bot\mid \phi\vee\phi\mid\phi\wedge\phi\mid \heart(\bar\phi)\mid \sharp_{\gamma}({\bar\phi}/{\bar v})\mid \boldsymbol{\neg \phi}
\end{align*}
yielding CFL${}_\sharp$, which corresponds to the full alternation-free fragment of the coalgebraic $\mu$-calculus.

The picture is similar to that in \cite{balan2014positive}. There, the authors define `positive fragments' of one-step logics (however restricted to the logical connection between $\Sets$ and $\BA$). For our purposes, we would prefer to start with a (positive) logic, and consider `negative extensions'. There is also the interesting question of expanding the notion of a `positive fragment'/ `negative extension' beyond the pair $\BA/\DLatt$, to more quantitative logics. 

It is important to note that, while this section shows how to extend the initial algebra semantics of the positive logic to a logic with negation, it is not clear whether we can define the semantics for the logic with negation directly, i.e., based on the unforlding system as 
in~Definition~\ref{def:smap}.

\section{Conclusion}\label{sec:conclusion}

We have provided a new categorical framework for
studying coalgebraic fixpoint logics based on the dual adjunction framework.
The framework provides both a least-solution and an initial algebra semantics. 
We exemplified the framework using a number of different examples such as 
the positive modal logic of transitive closure, its probabilistic variant, positive PDL 
and the positive alternation-free fragment of the coalgebraic $\mu$-calculus. We also
showed how to add negations to these logics, but hasten to remark that in this case
we only have an initial algebra semantics.
As a first simple example of how the framework can be used we provided a generic proof of adequacy 
of our fixpoint logics, i.e., the semantics of any logic that fits in the framework is invariant
under behavioural equivalence. 

   Our framework is based on adding recursion to coalgebraic modal logic where the one-step semantics is given by a type of distributive law. In~\cite{DBLP:journals/jlp/RotB16}, instead, it is shown how to add a form of recursion to \emph{abstract GSOS specifications}---a different type of distributive laws that can be used to study structural operational semantics at a high level of generality---using ``unfolding'' natural transformations, similar to the current approach. An interesting theoretical direction may be to try and unify these extensions; a good starting point may be the steps-and-adjunctions framework studied in~\cite{rot2021steps}.

The main contribution of this paper is the new 
categorical framework for fixpoint logics. In the future we hope to exploit
this in several ways: Firstly, we will study proof systems~\cite{focusMartiV21} 
and reasoning procedures~\cite{HausmannSE16} for alternation-free 
fixpoint logics and will lift soundness and completeness proofs to our
framework. Secondly, the alternation-free fragment of the modal $\mu$-calculus has interesting
model-theoretic properties~\cite{facchinivenema2013} and we will explore whether those can be
recovered within our framework. Another important question -- suggested by two of the anonymous
referees -- is whether we can use our framework to provide a completeness proof for 
the Segerberg axioms for PDL~\cite{kozepari1981} or their coalgebraic generalisation from~\cite{hanskupk2015}. These logics certainly do fit into our framework, the key question to solve will be 
to formally connect the induction axiom or the least fixpoint rule to the requirement that 
the semantic map is defined as a least fixpoint.  
Finally, the standard dual adjunction framework 
can be used to define certain filtrations~\cite{BarloccoKR19}. We have  reasons to believe
that our framework of unfolding systems can be used to show that a (fixpoint) logic has filtrations.
This opens an avenue for studying filtrations and abstraction techniques for fixpoint logics 
in a categorical way.

\section*{Acknowledgments}

The authors would like to thank the anonymous referees for instructive comments and interesting questions for 
future work, as well as for suggesting the formula in example \ref{ex:mucalc2}.

\bibliographystyle{plain}
\bibliography{sandi.bib}

\end{document}